\begin{document}


\setcounter{page}{185}
\publyear{22}
\papernumber{2107}
\volume{185}
\issue{2}

  \finalVersionForIOS


\title{On the $2$-domination Number of Cylinders with Small Cycles}

\author{Ester M. Garz\'on,$\,$ Jos\'e A. Mart\'inez,$\,$ Juan J. Moreno   \\
Department of Computer Sciences and Agrifood Campus of International Excellence (ceiA3) \\
Universidad de Almer\'ia \\
Carretera Sacramento s/n, 04120 Almer\'ia, Spain\\
gmartin@ual.es, jmartine@ual.es, juanjomoreno@ual.es
\and  Mar\'ia L. Puertas\thanks{Address for correspondence: Department of Mathematics and Agrifood Campus of
                   International Excellence (ceiA3), Universidad de Almer\'ia, Carretera Sacramento s/n, 04120 Almer\'ia, Spain. \newline \newline
          \vspace*{-6mm}{\scriptsize{Received September 2021; \ accepted February 2022.}}}
  \\
Department of Mathematics and Agrifood Campus of International Excellence (ceiA3) \\
Universidad de Almer\'ia \\
Carretera Sacramento s/n, 04120 Almer\'ia, Spain\\
mpuertas@ual.es
}

\maketitle

\runninghead{E.M. Garz\'on et al.}{On the $2$-domination Number of Cylinders with Small Cycles}

\begin{abstract}
Domination-type parameters are difficult to manage in Cartesian product graphs and there is usually no general relationship between the parameter in both factors and in the product graph. This is the situation of the domination number, the Roman domination number or the $2$-domination number, among others. Contrary to what happens with the domination number and the Roman domination number, the $2$-domination number remains unknown in cylinders, that is, the Cartesian product of a cycle and a path and in this paper, we will compute this parameter in the cylinders with small cycles. We will develop two algorithms involving the $(\min,+)$ matrix product that will allow us to compute the desired values of $\gamma_2(C_n\Box P_m)$, with $3\leq n\leq 15$ and $m\geq 2$. We will also pose a conjecture about the general formul\ae\ for the $2$-domination number in this graph class.
\end{abstract}

\begin{keywords}
$2$-domination, Cartesian product, $(\min,+)$ matrix product
\end{keywords}

\section{Introduction}\label{Sec:Introduction}

Domination-type parameters in graphs (see~\cite{Haynes1998}) are a well-known tool to approach the problem of efficiently locating resources in a network. The variety of such parameters allows us to address different distribution requirements of such resources. The original parameter is the domination number. A dominating set in a graph $G$ is a vertex set $D$ such that every vertex not in $D$ has at least one neighbor in it. The domination number of $G$ is $\gamma (G)$, the minimum cardinal of a dominating set of $G$. This parameter is still under study (see~\cite{Bujtas2021}). Moreover, several variations of it have been defined, even quite recently (see~\cite{Chellali2021}). Applications of some such variations to the optimal location of radio stations or land surveying sensors can be found in~\cite{Haynes1998}.

If the distribution requirement is that every node of the network should have access to at least two resources, the so-called $2$-domination arises (see~\cite{Fink1985}). A $2$-dominating set of a graph $G$ is a vertex subset $S$ such that every vertex not in $S$ has at least two neighbors in $S$. The $2$-domination number $\gamma_2(G)$ is the minimum cardinal of a $2$-dominating set of $G$.

\medskip
Following the general trend of domination-type parameters, the problem of computing the $2$-domination number is NP-complete in general graphs (see~\cite{Bonomo2018}). Furthermore, these parameters are difficult to compute in Cartesian product graphs. Exhaustive information about graph products can be found in~\cite{Imrich2000}. For information about domination in Cartesian product graphs see~\cite{Vizing1968,Bresar2012,Bresar2021}. Recall that the Cartesian product of two graphs $G\Box H$ is defined as follows:
\begin{itemize}
\item the vertex set is $V(G\Box H)=V(G)\times V(H)$, that is, the Cartesian product of the sets $V(G)$ and $V(H)$,
\item two vertices $(g_1,h_1), (g_2,h_2)$ are adjacent in $G\Box H$ if and only if
    \begin {itemize}
    \item either $g_1=g_2$ and $h_1, h_2$ are adjacent in $H$,
    \item or $g_1,g_2$ are adjacent in $G$ and $h_1=h_2$.
    \end{itemize}
\end{itemize}

The $2$-domination number remains unknown for general Cartesian product graphs while the $2$-domination number of particular cases of Cartesian product of two paths~\cite{Rao2019} or the Cartesian product of two cycles~\cite{Shaheen2009} have been computed. Moreover, the general problem is still open for the Cartesian product of a path and a cycle and the Cartesian product of two cycles, but the $2$-domination number of the Cartesian product of two paths has recently been obtained in~\cite{Rao2019}, by using appropriate algorithms. The main tool of these algorithms is the $(\min,+)$ matrix multiplication, defined over the semi-ring $\mathcal{P}=(\mathbb{R}\cup\{\infty\}, \min, +, \infty, 0)$ of tropical numbers in the minimum convention~\cite{Pin98}. This matrix multiplication, that we denote with $\boxtimes$, is defined by $A \boxtimes B=(c_{ij}=\min_{k}(a_{ik}+b_{kj}))$, that is, the operations multiplication and addition in the usual matrix product are replaced by addition and minimization, respectively. Moreover, the $(\min,+)$ product of a matrix $A$ and $\alpha\in \mathbb{R}\cup\{\infty\}$ is defined by $(\alpha\boxtimes A)_{ij}=\alpha+a_{ij}$.

Similar algorithms have been used to approach the computation of several domination-type parameters in Cartesian product graphs. The technique was originally presented in~\cite{Klavzar1996} for fasciagraphs and rotagraphs, in which Cartesian products of paths and cycles are particular cases, and it also appears in~\cite{Rao2019,Spalding1998,Goncalves2011,Pavlic2012,Pavlic2013}, among others. We follow these ideas in this paper.

In the cases mentioned above, the process has two steps. Firstly, it is necessary to compute the value of the parameter in some ``small'' cases consisting of bounding the order of one of the factors of the Cartesian product graph. The behavior of the domination-like parameters in such graphs is not regular for very small paths or cycles, but it becomes regular for big enough cases. Once such regular behavior becomes apparent, a specific procedure can be designed to obtain the value of the desired parameter for the general case.

In grids $P_m\Box P_n$ both factors are paths and bounding the order of any of them is equal. However, there are two options in cylinders $C_n\Box P_m$, either bounding the cycle order or the path order, that is, considering either the case $3\leq n\leq N, m\geq 2$, or the case $2\leq m\leq M, n\geq 3$. In this paper we will focus on the first case and we compute the $2$-domination number of cylinders with a small cycle, by means of two algorithms that use the $(\min,+)$ product of  large sparse matrices and dense vectors. In Section~\ref{Sec:theory} we present the theoretical results needed to ensure the validity of the algorithms that we run in Section~\ref{Sec:algorithms}. Finally, we sum up our results in Section~\ref{sec:conclusions}.

\section{Theoretical results}\label{Sec:theory}

In this section we present the results that will allow us to design two algorithms to compute the $2$-domination number of selected cylinders with small cycles and any path.

\medskip
We will use the following notation for the cylinder $C_n\Box P_m$. The vertex set is $V(C_n\Box P_m)=\{ v_{ij}\colon 0\leq i\leq n-1, 0\leq j\leq m-1\}$. The $i$-th row is the subgraph generated by the vertex subset $\{v_{ij}\colon 0\leq j\leq n-1\}$, that is isomorphic to $P_m$, and the $j$-th column is the subgraph generated by $\{v_{ij}\colon 0\leq i\leq m-1\}$, being isomorphic to $C_n$. We numerate columns from left to right (see Figure~\ref{Figure:rows}).

\begin{figure}[h]
\vspace{1mm}
\centerline{\includegraphics[width=0.3\textwidth]{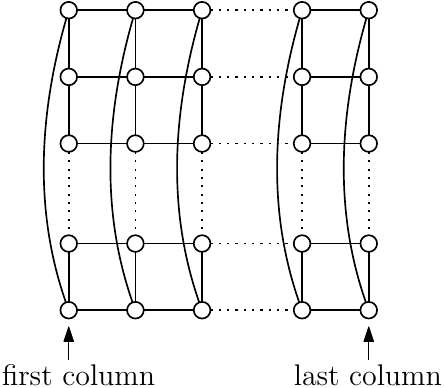}}
\caption{The columns of the cylinder $C_n\Box P_m$.\label{Figure:rows}}
\end{figure}

The neighbors of a vertex $v_{ij}$ are the following (the first index is taken module $n$):
\begin{itemize}
\itemsep=0.95pt
\item if $j=1$, then $v_{ij}=v_{i1}$ is on the first column and its neighbors are $v_{(i-1) 1}, v_{(i+1) 1}, v_{i2}$,

\item if $j=m-1$, then $v_{ij}=v_{i(m-1)}$ is on the last column and its neighbors are $v_{(i-1)(m-1)},$ $v_{(i+1)(m-1)},$ $v_{i(m-2)}$,

\item if $1<j<m-1$, then $v_{ij}$ has four neighbors: $v_{(i-1) j}, v_{(i+1) j}, v_{i (j-1)}, v_{i (j+1)}$.
\end{itemize}


The general idea of the algorithm is to run the same routine several times, in such way that the first step includes only the first column and the successive steps add one column at a time. To this end, we need the following definition similar to the $2$-dominating set but leaving the last column not dominated, bearing in mind that such column could be dominated in the following step.

\begin{definition}
A quasi-$2$-dominating set of the cylinder $C_n\Box P_m$ is a vertex subset $R$ such that every vertex not in $R$ in the last column has at least one neighbor in $R$ and every vertex not in $R$ in the rest of the columns has at least two neighbors in $R$.\\
Clearly, a $2$-dominating set is a quasi-$2$-dominating set such that every vertex in the last column is also $2$-dominated.
\end{definition}

Our objective is to identify each quasi-$2$-dominating set of $C_n\Box P_m$ with a particular vertex labeling that will allow us to handle such vertex subsets with the appropriate algorithms. To this end, let $R$ be a quasi-$2$-dominating set of $C_n\Box P_m$. We label the vertices of $C_n\Box P_m$ with labels $0,1,2$ by using the following rules.
\begin{itemize}
\itemsep=0.95pt
\item every $v\in R$ has label $0$,
\item $v\in V(C_n\Box P_m)\setminus R$ having at least two neighbors in $R$ in its column or in the previous one is labeled as $1$,
\item otherwise, that is, $v\in V(C_n\Box P_m)\setminus R$ has exactly one the neighbor in $R$ in its column or in the previous one, we label the vertex $v$ as $2$.
\end{itemize}

By using this labeling, each column is now identified with an ordered list, that is, a word of length $n$ over the alphabet $\{0,1,2\}$. It is clear that not any word of length $n$ can represent a column of $C_n\Box P_m$, due to the restrictions derived from $R$ being a quasi-$2$-dominating set. By definition of the labeling, subsequences $111, 211, 112, 212$ are not possible because each vertex with label $1$ has at least one neighbor in $R$, that is, labeled as $0$, in its column. Moreover, subsequence $020$ is not allowed by definition of label $2$.

However, vertices in the first column have a different behavior. The first column has no previous one, so subsequences $110, 011, 012, 210$ are not allowed there because both neighbors of vertices with label $1$ in the first column, must be in $R$.

Moreover, in the particular case of a $2$-dominating set $S$, the last column is also different because it has no following one, so every vertex in the last column either belongs to $S$ or has at least two neighbors in $S$ in their column or the previous one. Therefore there is no vertices with label $2$ in the last column.

We resume these ideas in the following definition.

\begin{definition}
Let $\boldsymbol{p}$ be a word of length $n$ over the alphabet $\{0,1,2\}$.
\begin{itemize}
\itemsep=0.95pt
 \leftskip=-2.2mm
\item $\boldsymbol{p}$ is called suitable if it does not contain any of the subsequences $111, 211, 112, 212, 020$.
\item $\boldsymbol{p}$ is called initial if it is suitable and it does not contain any of the subsequences $110, 011, 012, 210$.
\item $\boldsymbol{p}$ is called final if it is suitable and it does not contain any $2$.
\item The weight $\omega(\boldsymbol{p})$ of a suitable word $\boldsymbol{p}$ is the number of $0's$ contained in $\boldsymbol{p}$.
\end{itemize}
\end{definition}
\eject

In the same way that just some words are appropriate to represent a quasi-$2$-dominating set, there are some rules that words in consecutive columns must follow due to the labeling definition and because $R$ is a quasi-$2$-dominating set. We quote such rules in the following definition.

\begin{definition}
Let $\boldsymbol{p}=(p_0, \dots , p_{n-1}), \boldsymbol{q}=(q_0,\dots , q_{n-1})$ be suitable words of length $n$ over the alphabet $\{0,1,2\}$. We say that $\boldsymbol{p}$ can follow $\boldsymbol{q}$ if the following conditions hold, for each $i\in \{0,1,\dots, n-1\}$
 (indices are taken module $n$): \vspace*{-2mm}
\begin{enumerate}
\item[ ] \ \\
\noindent
$
\begin{array}{lll}
  \text{if } q_i=0 \text{ then}&\text{either}& p_i=0 \\
                    &\text{or}& p_i=1\\
                    &\text{or}& p_i=2, p_{i-1}\neq 0, p_{i+1}\neq 0
\end{array}
$

\item[ ] \ \\
\noindent
$
\begin{array}{lll}
  \text{if } q_i=1 \text{ then}&\text{either}& p_i=0 \\
                    &\text{or}& p_i=1,  p_{i-1}= 0, p_{i+1}= 0\\
                    &\text{or}&  p_i=2, p_{i-1}=0\\
                    &\text{or}& p_i=2, p_{i+1}=0
\end{array}
$

\item[ ] \ \\
\noindent
$
\begin{array}{lll}
  \text{if } q_i=2 &\text{then}& p_i=0
\end{array}
$

\end{enumerate}
\end{definition}

The following proposition describes the identification of each quasi-$2$-dominating set of $C_n\Box P_m$ with a particular labeling of the vertices, as we had announced. Such identification will allow us to encode the information of each quasi-$2$-dominating set and to use it in our algorithms.

\begin{proposition}\label{prop:basic}
\begin{enumerate}
\item Each quasi-$2$-dominating set of $C_n\Box P_m$ can be identified with an ordered list $\boldsymbol{p^0}, \boldsymbol{p^1},\dots,$ $\boldsymbol{p^{m-1}}$ of m suitable words of length $n$ such that $\boldsymbol{p^0}$ is initial, and $\boldsymbol{p^{i+1}}$ can follow $\boldsymbol{p^i}$, for every  $i\in \{0,\dots , m-2\}$.

    Conversely, every ordered list $\boldsymbol{p^0}, \boldsymbol{p^1},\dots,$ $\boldsymbol{p^{m-1}}$ of m suitable words of length $n$ such that $\boldsymbol{p^0}$ is initial, and $\boldsymbol{p^{i+1}}$ can follow $\boldsymbol{p^i}$, for every  $i\in \{0,\dots , m-2\}$, represents a unique quasi-$2$-dominating set of $C_n\Box P_m$.

\item If $R=\boldsymbol{p^0}, \boldsymbol{p^0},\dots,$ $\boldsymbol{p^{m-1}}$ is a quasi-$2$-dominating set of $C_n\Box P_m$, then
    $|R|=\sum_{k=0}^{m-1} \omega(\boldsymbol{p^k})$.
\item  $R=\boldsymbol{p^0}, \boldsymbol{p^1},\dots,$ $\boldsymbol{p^{m-1}}$ is a quasi-$2$-dominating set of $C_n\Box P_m$ if and only if $R'=\boldsymbol{p^0}, \boldsymbol{p^1},\dots,$ $\boldsymbol{p^{m-2}}$ is a quasi-$2$-dominating set of $C_n\Box P_{m-1}$ and $\boldsymbol{p^{m-1}}$ can follow $\boldsymbol{p^{m-2}}$.
\item A quasi-$2$-dominating set $S=\boldsymbol{p^0}, \boldsymbol{p^1},\dots,$ $\boldsymbol{p^{m-1}}$ is a $2$-dominating set of $C_n\Box P_m$ if and only if the last word $\boldsymbol{p^{m-1}}$ is final.
\end{enumerate}
\end{proposition}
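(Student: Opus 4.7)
My plan is to treat the four items in order, with Part~1 carrying essentially all the work and Parts~2--4 following as easy consequences of the encoding it sets up. Throughout I will freely use the intended meaning of the three labels: a vertex is labeled $0$ iff it belongs to $R$; a non-$R$ vertex is labeled $1$ iff it has at least two $R$-neighbors in its own column together with the previous column, and $2$ iff it has exactly one such $R$-neighbor.

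For the forward direction of Part~1, I would verify that the column labelings $\boldsymbol{p^0},\ldots,\boldsymbol{p^{m-1}}$ are suitable. The prohibited triples $111,\,211,\,112,\,212$ all have a middle symbol $1$ with no adjacent $0$ in the same column; but a vertex with label $1$ has at least two $R$-neighbors in its column plus the previous column, and since the previous column contributes at most one such neighbor, at least one of the two column-neighbors must belong to $R$, contradicting the triple. The triple $020$ is ruled out because a vertex with both column-neighbors in $R$ already has two $R$-neighbors in its column, forcing label $1$ rather than $2$. For the first column no previous column exists, so label $1$ there requires \emph{both} column-neighbors to be in $R$; the same type of argument rules out $110,\,011,\,012,\,210$, hence $\boldsymbol{p^0}$ is initial. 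To check the ``can follow'' rules between consecutive columns, I would split on the value of $q_i\in\{0,1,2\}$ and count the $R$-neighbors of $v_{i,j}$ coming from column $j-1$ (one if $q_i=0$, none otherwise) against those coming from column $j$ (determined by $p_{i\pm 1}$); the tabulated conditions are precisely those making these totals match the prescribed meaning of $p_i$. The case $q_i=2$ additionally uses that $v_{i,j-1}$ lies in a non-final column, is therefore quasi-$2$-dominated, and so must receive its second $R$-neighbor from column $j$, forcing $p_i=0$.

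For the converse direction of Part~1, given a list of words satisfying the stated constraints I take $R$ to be the vertices labeled $0$ and verify the two conditions defining a quasi-$2$-dominating set. The suitability of each $\boldsymbol{p^k}$ yields the column-local contribution (label~$1$ forces an adjacent $0$ in the same column, label~$2$ allows at most one); the ``can follow'' rules then supply the $R$-neighbor coming from the previous column for labels in non-initial columns, and the rule $q_i=2\Rightarrow p_i=0$ supplies the mandatory $R$-neighbor coming from the next column for any label $2$ in a non-final column. The initial condition on $\boldsymbol{p^0}$ closes the boundary case where no previous column exists.

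Parts~2--4 then follow quickly. Part~2 is a direct count, since the vertices of $R$ are exactly those labeled $0$ and $\omega(\boldsymbol{p^k})$ counts the $0$'s in the $k$-th column. For Part~3, restricting to the first $m-1$ columns preserves all neighborhoods of vertices in columns $0,\ldots,m-3$, while for a vertex in column $m-2$ (the new last column) the only $R$-neighbor possibly lost is the single one in column $m-1$, so at least one $R$-neighbor in $R'$ always remains, as required for the last column of $C_n\Box P_{m-1}$. Part~4 is the observation that in the last column a label $1$ already certifies at least two $R$-neighbors (no following column can contribute further) while a label $2$ certifies exactly one, so the set is $2$-dominating precisely when the last word contains no~$2$, i.e.\ is final. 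The principal obstacle is the bookkeeping in Part~1, especially in tabulating the ``can follow'' relation; once the labels are read correctly each case reduces to counting at most three potential $R$-neighbors, and the rest of the proposition amounts to reading off consequences of this encoding.
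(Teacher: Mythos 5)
Your overall strategy is the same as the paper's: encode each quasi-$2$-dominating set by its column labels, check that the label rules translate exactly into ``suitable'', ``initial'' and ``can follow'', and read items 2--4 off the encoding. In the forward direction of item 1 and in items 2 and 4 your counting arguments are correct and in fact more detailed than the paper's own proof.

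However, the converse half of item 1 has a genuine gap, located at the sentence ``the initial condition on $\boldsymbol{p^0}$ closes the boundary case where no previous column exists.'' The definition of an initial word only restricts the neighbours of the symbols $1$ (it forbids $110,011,012,210$); it says nothing about the symbols $2$ in the first column. In the forward direction a $2$ in the first column always has exactly one neighbour $0$ (such a vertex has exactly one $R$-neighbour, necessarily in its own column), but an initial word need not satisfy this: for instance $\boldsymbol{p^0}=22\cdots2$ is initial, $\boldsymbol{p^1}=00\cdots0$ is suitable and can follow it, yet the label-$0$ set of this list (the whole second column of $C_n\Box P_2$) is not quasi-$2$-dominating, because each first-column vertex has only one neighbour in it while the first column is not the last. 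Your own converse bookkeeping shows the problem: ``label $2$ allows at most one'' adjacent $0$ gives no lower bound, so for such a vertex the only guaranteed $R$-neighbour is the forced $0$ in the next column, one short of the two required. The repair is to strengthen ``initial'' so that every $2$ also has exactly one neighbour equal to $0$ --- precisely what the forward direction produces --- after which your count closes the converse. To be fair, the paper's proof dismisses this direction with ``Clearly\dots'' and inherits the same defect, so the issue traces to the stated definition rather than to your strategy; but as written the step fails. (A smaller omission: your item 3 argues only the restriction direction; its converse needs the same column-by-column count you used in item 1, as the paper spells out.)
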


\begin{proof}

\vspace*{-6mm}
\begin{enumerate}
\item Let $R$ be a quasi-$2$-dominating set of $C_n\Box P_m$ and consider the associated vertex labeling. Let $\boldsymbol{p^j}$ be the word associated to vertices in the $j$-th column, for $j\in \{0,\dots ,m-1\}$. Then, the rules of the labeling ensure that the word list $\boldsymbol{p^0}, \boldsymbol{p^1},\dots,$ $\boldsymbol{p^{m-1}}$ satisfies the desired properties.

    Conversely, consider a word list $\boldsymbol{p^0}, \boldsymbol{p^1},\dots,$ $\boldsymbol{p^{m-1}}$ with the properties described above. For each $j\in \{0,\dots, m-1\}$, $\boldsymbol{p^j}=(p^j_1, \dots , p^j_{n-1})$ and we label the vertices in the $j$-th column as $v_{ij}=p^j_i, i\in \{1,\dots , n-1\}$. Clearly, the set of all vertices in $C_n\Box P_m$ with label $0$ is quasi-$2$-dominating.

\item Let $R=\boldsymbol{p^0}, \boldsymbol{p^1},\dots,$ $\boldsymbol{p^{m-1}}$ be a quasi-$2$-dominating set, identified with its word list. By the construction provided in the preceding item, the cardinal of $R$ is the number of vertices with label $0$, that is $|R|=\sum_{k=0}^{m-1} \omega(\boldsymbol{p^k})$.

\item Let $R=\boldsymbol{p^0}, \boldsymbol{p^1},\dots,$ $\boldsymbol{p^{m-1}}$ be a quasi-$2$-dominating set. Then, every vertex in columns from the first one to the $(m-2)$-th has at least two neighbors in $R$, that are also in $R'=\boldsymbol{p^0}, \boldsymbol{p^1},\dots,$ $\boldsymbol{p^{m-2}}$, except perhaps in the case of column $(m-2)$-th, where just one neighbor in $R'$ is ensured.  Moreover $\boldsymbol{p^0}$ is initial and $\boldsymbol{p^{i+1}}$ can follow $\boldsymbol{p^i}$, for every  $i\in \{0,\dots , m-2\}$. This means that $R'$ is a quasi-$2$-dominating set of $C_n\Box P_{m-1}$, as desired.\vspace{1mm}

    Conversely, let $\boldsymbol{p^0}, \boldsymbol{p^1},\dots,$ $\boldsymbol{p^{m-1}}$ be a word list such that $R'=\boldsymbol{p^0}, \boldsymbol{p^1},\dots,$ $\boldsymbol{p^{m-2}}$ is a quasi-$2$-dominating set of $C_n\Box P_{m-1}$ and $\boldsymbol{p^{m-1}}$ can follow $\boldsymbol{p^{m-2}}$. Then, vertices in columns from the first one to the $(m-3)$-th have at least two neighbors with label $0$, because $R'$ is quasi-$2$-dominating, vertices in the $(m-2)$-th column have at least two neighbors with label $0$ because $R'$ is quasi-$2$-dominating and $\boldsymbol{p^{m-1}}$ can follow $\boldsymbol{p^{m-2}}$ and vertices in $(m-1)$-th column have at least one neighbor with label $0$, because $\boldsymbol{p^{m-2}}$ is suitable and $\boldsymbol{p^{m-1}}$ can follow $\boldsymbol{p^{m-2}}$. These conditions ensure that $R=\boldsymbol{p^0}, \boldsymbol{p^1},\dots,$ $\boldsymbol{p^{m-1}}$ is a quasi-$2$-dominating set of $C_n\Box P_m$.

\item Clearly, a quasi-$2$-dominating set $S=\boldsymbol{p^0}, \boldsymbol{p^1},\dots,$ $\boldsymbol{p^{m-1}}$ is a $2$-dominating set if and only if every vertex in the last column either is in $S$ or has at least two neighbors in $S$, so $S$ is a $2$-dominating set if and only if $\boldsymbol{p^{m-1}}$ does not contain any $2$, that is, it is a final word.
\end{enumerate}

\vspace*{-6mm}
\end{proof}

We now define the tools that will be needed to compute the $2$-domination number of the cylinder $C_n\Box P_m$. Let $n\geq 3$ be an integer and denote by $s(n)$ the number of suitable words of length $n$. The initial vector $X^1=(X^1(\boldsymbol{p^1}), \dots ,$
$X^1(\boldsymbol{p^{s(n)}}))$ is a vector of length $s(n)$, such that each entry corresponds to a suitable word, defined as follows.
\begin{eqnarray} \label{equation:vector}
X^1(\boldsymbol{p})=
\left\{
\begin{array}{ll}
\omega(\boldsymbol{p}) & \text{if } \boldsymbol{p} \text{ is initial}\\
\infty & \text{otherwise}
\end{array}
\right.
\end{eqnarray}

The transition matrix $A=(A_{\boldsymbol{pq}})$ is the square matrix with size $s(n)$ such that each entry corresponds to a pair of suitable words, defined as follows.
\begin{eqnarray}\label{equation:matrix}
A_{\boldsymbol{pq}}=
\left\{
\begin{array}{ll}
\omega(\boldsymbol{p}) & \text{if } \boldsymbol{p} \text{ can follow } \boldsymbol{q}\\
\infty & \text{otherwise}
\end{array}
\right.
\end{eqnarray}

In the following theorem we describe how the initial vector and the transition matrix collect the information about the quasi-$2$-dominating sets of a cylinder needed to compute its $2$-domination number.

\begin{theorem}\label{th:vector}

Let $n\geq 3$ be an integer. Let $X^1$ be the initial vector defined by Equation~\ref{equation:vector} and let $A$ be the transition matrix defined by Equation~\ref{equation:matrix}. Let $m\geq 2$ be an integer and let $X^2, \dots , X^m$ be the vectors recursively obtained by $X^{i+1}=A\boxtimes X^i$. Then
\begin{eqnarray*}
X^m(\boldsymbol{p})=
\left\{
{\small
\begin{array}{ll}
\infty & \text{\small if there is no quasi-2-dominating set in } \\
 & C_n\Box P_m\text{ with word } \boldsymbol{p} \text{ in the last column}\\
 & \\
\text{minimum cardinal of a quasi-2-dominating set } & \\
\text{ of } C_n\Box P_m \text{ with word } \boldsymbol{p} \text{ in the last column} & \text{otherwise}
\end{array}
}
\right.
\end{eqnarray*}
\end{theorem}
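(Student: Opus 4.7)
The plan is to proceed by induction on $m$, using Proposition~\ref{prop:basic}(3) to pass from one column to the next, and treating $m=1$ as the base case so that the inductive step directly yields the claim for every $m\geq 2$.

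For the base case $m=1$, I would combine parts (1) and (2) of Proposition~\ref{prop:basic}: the former identifies the quasi-$2$-dominating sets of $C_n\Box P_1$ with the initial words of length $n$, and the latter shows that the cardinal of the set represented by an initial word $\boldsymbol{p}$ equals $\omega(\boldsymbol{p})$. Since this correspondence is injective, the entry $X^1(\boldsymbol{p})$ from Equation~\ref{equation:vector} returns the minimum (in fact unique) cardinal when $\boldsymbol{p}$ is initial and $\infty$ otherwise, which matches the claimed formula.

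For the inductive step, suppose the statement holds for $m$ and fix a suitable word $\boldsymbol{p}$. I would invoke Proposition~\ref{prop:basic}(3) to obtain the bijection between quasi-$2$-dominating sets of $C_n\Box P_{m+1}$ whose last word is $\boldsymbol{p}$ and quasi-$2$-dominating sets of $C_n\Box P_m$ whose last word is some $\boldsymbol{q}$ such that $\boldsymbol{p}$ can follow $\boldsymbol{q}$. Under this bijection, appending the column $\boldsymbol{p}$ adds exactly $\omega(\boldsymbol{p})$ vertices labelled $0$, so the cardinal grows by $\omega(\boldsymbol{p})$. Consequently, the minimum cardinal of a quasi-$2$-dominating set of $C_n\Box P_{m+1}$ with last word $\boldsymbol{p}$ is
\[
\min_{\boldsymbol{q}:\,\boldsymbol{p}\text{ can follow }\boldsymbol{q}}\bigl(\omega(\boldsymbol{p})+X^m(\boldsymbol{q})\bigr),
\]
where by the induction hypothesis $X^m(\boldsymbol{q})$ is the minimum cardinal of a quasi-$2$-dominating set of $C_n\Box P_m$ with last word $\boldsymbol{q}$, or $\infty$ when no such set exists. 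Comparing with Equation~\ref{equation:matrix}, the factor $\omega(\boldsymbol{p})$ together with the constraint ``$\boldsymbol{p}$ can follow $\boldsymbol{q}$'' is precisely $A_{\boldsymbol{pq}}$, so the displayed expression coincides with $(A\boxtimes X^m)(\boldsymbol{p})=X^{m+1}(\boldsymbol{p})$, closing the induction.

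I do not anticipate a significant obstacle, since the combinatorial content has already been packaged in Proposition~\ref{prop:basic} and the theorem is essentially an algebraic reformulation of the column-by-column decomposition. The only delicate point I would flag is the $\infty$-convention of the tropical semi-ring, which must correctly absorb two failure modes: when $\boldsymbol{p}$ itself is unsuitable (so $A_{\boldsymbol{pq}}=\infty$ for every $\boldsymbol{q}$) and when no admissible predecessor $\boldsymbol{q}$ admits a quasi-$2$-dominating set of $C_n\Box P_m$ ending in $\boldsymbol{q}$ (so $X^m(\boldsymbol{q})=\infty$ for every such $\boldsymbol{q}$). In both cases the $(\min,+)$ sum yields $\infty$, matching the nonexistence of a quasi-$2$-dominating set of $C_n\Box P_{m+1}$ with last word $\boldsymbol{p}$.
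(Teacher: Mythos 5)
Your inductive step is essentially the paper's own argument: the paper also proceeds by induction on $m$, and its two cases ($X^m(\boldsymbol{p})=\infty$ and $X^m(\boldsymbol{p})<\infty$) are exactly your passage from one column to the next via Proposition~\ref{prop:basic}, with the weight $\omega(\boldsymbol{p})$ together with the ``can follow'' condition matching the entry $A_{\boldsymbol{pq}}$ of Equation~\ref{equation:matrix}. That part is sound.

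The genuine gap is your base case $m=1$. The statement you anchor the induction on --- that $X^1(\boldsymbol{p})$ equals the minimum cardinal of a quasi-$2$-dominating set of $C_n\Box P_1$ with word $\boldsymbol{p}$, and $\infty$ when none exists --- is false. Take $\boldsymbol{p}=22\cdots 2$: it contains none of the sequences $111,211,112,212,020,110,011,012,210$, so it is initial and $X^1(\boldsymbol{p})=\omega(\boldsymbol{p})=0$; yet a quasi-$2$-dominating set of $C_n\Box P_1\cong C_n$ is simply a dominating set of $C_n$, and the only vertex set whose labeling contains no $0$ is the empty set, which dominates nothing. The reason is that ``initial'' constrains only the neighbourhoods of label $1$ (besides forbidding $020$); a label $2$ in the first column is permitted to have no $0$-neighbour in its own word, because the missing neighbour is meant to be supplied by the \emph{next} column through the transition rule $q_i=2\Rightarrow p_i=0$. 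Consequently initial words do not coincide with labelings of dominating sets of the single cycle, the paper never applies Proposition~\ref{prop:basic} with $m=1$, and an induction anchored at $m=1$ already asserts a semantic meaning for finite entries such as $X^1(22\cdots 2)=0$ that they do not possess. The paper avoids this by starting the induction at $m=2$: it uses only the formal facts that $X^1(\boldsymbol{q})$ is finite exactly for initial $\boldsymbol{q}$, with value $\omega(\boldsymbol{q})$, and that, by Proposition~\ref{prop:basic} applied to $C_n\Box P_2$, the pairs consisting of an initial word $\boldsymbol{q}$ and a word $\boldsymbol{p}$ that can follow it are precisely the word lists of quasi-$2$-dominating sets of $C_n\Box P_2$. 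Replace your base case by this direct computation of $X^2$; your inductive step can then remain as written.
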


\begin{proof}
We proceed by induction over $m\geq 2$. First of all, $X^2=A\boxtimes X^1$, so for each suitable word $\boldsymbol{p}$, by definition of the $(\min ,+)$ matrix multiplication, we obtain:
\begin{eqnarray*}
X^2(\boldsymbol{p})=\min \{ A_{\boldsymbol{p p^k}}+X^1(\boldsymbol{p^k}) \colon k\in \{1,\dots , s(n)\} \}
\end{eqnarray*}

We now consider two different cases.
\par\bigskip
Case 1: $X^2(\boldsymbol{p})=\infty$.

This means that, for every  $k\in \{1,\dots , s(n)\}$ either $A_{\boldsymbol{p p^k}}=\infty$ or $X^1(\boldsymbol{p^k})=\infty$, that is, either $\boldsymbol{p}$ cannot follow $\boldsymbol{p^k}$ or $\boldsymbol{p^k}$ is not initial. Therefore, there is no quasi-$2$-dominating set in $C_n\Box P_2$ with $\boldsymbol{p}$ in the last column.
\par\bigskip

Case 2: $X^2(\boldsymbol{p})<\infty$.

Then, there exists an index $\ell$ such that
\begin{eqnarray*}
\min \{ A_{\boldsymbol{p p^k}}+X^1(\boldsymbol{p^k}) \colon k\in \{1,\dots , s(n)\} \}=A_{\boldsymbol{p p^\ell}}+X^1(\boldsymbol{p^\ell})<\infty
\end{eqnarray*}
\noindent and, in particular, both $A_{\boldsymbol{p p^\ell}}$ and $X^1(\boldsymbol{p^\ell})$ are finite.

\medskip
Therefore, $\boldsymbol{p^\ell}$ is an initial word, $X^1(\boldsymbol{p^\ell})= \omega( \boldsymbol{p^\ell})$,  the word $\boldsymbol{p}$ can follow $\boldsymbol{p^\ell}$ and $A_{\boldsymbol{p p^\ell}}=\omega( \boldsymbol{p})$. So the quasi-$2$-dominating set represented by the word list $\boldsymbol{p^\ell p}$ has the word $\boldsymbol{p}$ in the last column and, by using Proposition~\ref{prop:basic}, its  cardinal is $\omega(\boldsymbol{p}) +  \omega(\boldsymbol{p^\ell})=A_{\boldsymbol{p p^\ell}}+X^1(\boldsymbol{p^\ell})$, which is minimum among the cardinals of the quasi-$2$-dominating sets having $\boldsymbol{p}$ in the last column, by the election of the index $\ell$. This concludes the first case of the induction and we now proceed with the inductive step.
\par\bigskip

Assume that the vector $X^{m-1}$ satisfies the desired properties, that is,
\begin{eqnarray*}
X^{m-1}(\boldsymbol{p})=
\left\{
{\small
\begin{array}{ll}
\infty & \text{if there is no quasi-2-dominating} \\
 & \text{set in } C_n\Box P_{m-1}\text{ with the word } \boldsymbol{p}\\
 &\text{in the last column}\\
 & \\
\text{minimum cardinal of a quasi-2-dominating } & \\
\text{set of } C_n\Box P_{m-1} \text{ with the word } \boldsymbol{p} \text{ in the}&\\
\text{last column} & \text{otherwise}
\end{array}
}
\right.
\end{eqnarray*}
\noindent and let $X^m=A\boxtimes X^{m-1}$. Then, $X^m(\boldsymbol{p})=\min \{ A_{\boldsymbol{p p^k}}+X^{m-1}(\boldsymbol{p^k}) \colon k\in \{1,\dots , s(n)\} \}$ and we again consider two cases.

\eject

Case 1: $X^m(\boldsymbol{p})=\infty$.

Then, $A_{\boldsymbol{p p^k}}+X^{m-1}(\boldsymbol{p^k})=\infty$ for every $k\in \{1,\dots , s(n)\} $. Suppose, on the contrary, that there is a quasi-$2$-dominating set in $C_n\Box P_{m}$ with the word $\boldsymbol{p}$ in the last column, represented by the word list $ \boldsymbol{p^{k_1}}, \dots ,$ $\boldsymbol{p}^{\boldsymbol{k_{m-1}}}$, $\boldsymbol{p}$. Thus, the word $\boldsymbol{p}$ can follow the word $\boldsymbol{p}^{\boldsymbol{k_{m-1}}}$ and moreover, by using Proposition~\ref{prop:basic}, the word list $ \boldsymbol{p^{k_1}}, \dots, $ $\boldsymbol{p}^{\boldsymbol{k_{m-1}}}$ represents to a quasi-$2$-dominating set in $C_n\Box P_{m-1}$ with $\boldsymbol{p}^{\boldsymbol{k_{m-1}}}$ in the last column. Both conditions give that $A_{\boldsymbol{p} \boldsymbol{p}^{\boldsymbol{k_{m-1}}}} +X^{m-1}(\boldsymbol{p}^{\boldsymbol{k_{m-1}}})<\infty$, a contradiction to the hypothesis of this case.
\par\bigskip

Case 2: $X^m(\boldsymbol{p})<\infty$.

Then, there exists an index $\ell$ such that $X^m(\boldsymbol{p})=\min \{ A_{\boldsymbol{p p^k}}+X^{m-1}(\boldsymbol{p^k}) \colon k\in \{1,\dots , s(n)\}=A_{\boldsymbol{p p^\ell}}+X^{m-1}(\boldsymbol{p^\ell})<\infty$. Thus, by the inductive hypothesis, there exists a quasi-$2$-dominating set in $C_n\Box P_{m-1}$ with word list $ \boldsymbol{p^{k_1}}, \dots ,$ $\boldsymbol{p}^{\boldsymbol{k_{m-2}}}$, $\boldsymbol{p^\ell}$ and, by definition of the transition matrix $A$, $\boldsymbol{p}$ can follow $\boldsymbol{p^\ell}$.

\medskip
Finally, again by using  Proposition~\ref{prop:basic}, the word list  $ \boldsymbol{p^{k_1}}, \dots ,$ $\boldsymbol{p}^{\boldsymbol{k_{m-2}}}$, $\boldsymbol{p^\ell},$ $\boldsymbol{p}$ represents a quasi-$2$-dominating set in $C_n\Box P_{m}$, with $\boldsymbol{p}$ in the last column and having cardinal $A_{\boldsymbol{p p^\ell}}+X^{m-1}(\boldsymbol{p^\ell})=X^m(\boldsymbol{p})$, which is the minimum among the cardinal of all the quasi-$2$-dominating sets satisfying the same conditions, by the election of the index $\ell$.

\medskip
This concludes the inductive proof.
\end{proof}

We conclude the results that allow us to compute $\gamma_2(C_n\Box P_m)$, where $n$ and $m$ are fixed integers, with the following theorem that will provide theoretical support for the algorithmic results.

\begin{theorem}\label{th:two_parameters}
Let $n\geq 3$ and $m\geq 2$ be integers. Let $X^1$ be the initial vector defined by Equation~\ref{equation:vector} and let $A$ be the transition matrix defined by Equation~\ref{equation:matrix}.  Let $X^2, \dots , X^m$ be the vectors recursively obtained by $X^{i+1}=A\boxtimes X^i$. Then
\begin{eqnarray*}
\gamma_2(C_n\Box P_m)=\min \{ X^m(\boldsymbol{p})\colon \boldsymbol{p} \text{ is a final world}\}
\end{eqnarray*}
\end{theorem}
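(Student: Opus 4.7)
The plan is to assemble the theorem as an immediate consequence of Theorem~\ref{th:vector} combined with Proposition~\ref{prop:basic}(4). Theorem~\ref{th:vector} already identifies $X^m(\boldsymbol{p})$ with the minimum cardinality of a quasi-$2$-dominating set of $C_n\Box P_m$ whose last-column word is $\boldsymbol{p}$ (or $\infty$ if no such set exists). Proposition~\ref{prop:basic}(4) says that within the family of quasi-$2$-dominating sets the $2$-dominating sets are exactly those whose last-column word is final. So minimizing $X^m(\boldsymbol{p})$ over final words $\boldsymbol{p}$ should give precisely $\gamma_2(C_n\Box P_m)$.

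First I would establish the inequality $\gamma_2(C_n\Box P_m)\leq \min\{X^m(\boldsymbol{p}): \boldsymbol{p}\text{ is final}\}$. Pick any final word $\boldsymbol{p}$ achieving the minimum; if the minimum is $\infty$ there is nothing to prove, otherwise $X^m(\boldsymbol{p})<\infty$ and Theorem~\ref{th:vector} provides a quasi-$2$-dominating set $R$ of $C_n\Box P_m$ with last-column word $\boldsymbol{p}$ and $|R|=X^m(\boldsymbol{p})$. Since $\boldsymbol{p}$ is final, Proposition~\ref{prop:basic}(4) upgrades $R$ to a $2$-dominating set, yielding $\gamma_2(C_n\Box P_m)\leq |R|=X^m(\boldsymbol{p})$.

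For the reverse inequality, take a minimum $2$-dominating set $S$ of $C_n\Box P_m$, so $|S|=\gamma_2(C_n\Box P_m)$. By Proposition~\ref{prop:basic}(1), $S$ is encoded by a word list $\boldsymbol{p^0},\dots,\boldsymbol{p^{m-1}}$ satisfying the initial/follow conditions, and by Proposition~\ref{prop:basic}(4) the word $\boldsymbol{q}:=\boldsymbol{p^{m-1}}$ is final. In particular $S$ is a quasi-$2$-dominating set with last-column word $\boldsymbol{q}$, so Theorem~\ref{th:vector} gives $X^m(\boldsymbol{q})\leq |S|=\gamma_2(C_n\Box P_m)$, and since $\boldsymbol{q}$ is final this forces $\min\{X^m(\boldsymbol{p}):\boldsymbol{p}\text{ is final}\}\leq\gamma_2(C_n\Box P_m)$, completing the proof.

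There is no real obstacle here: the theorem is essentially a bookkeeping step that packages the two main structural results. The only minor sanity check worth noting is that the right-hand minimum is never vacuous, because the all-zero word of length $n$ is final (it contains neither a $2$ nor any forbidden suitable-sequence) and the whole vertex set of $C_n\Box P_m$ is trivially a $2$-dominating set, so both sides are finite and the identity is a well-defined equality in $\mathbb{N}$.
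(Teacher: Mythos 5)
Your proposal is correct and follows essentially the same route as the paper: both deduce the identity from Theorem~\ref{th:vector} together with Proposition~\ref{prop:basic}(4), the paper doing so in a single chain of equalities starting from a minimum $2$-dominating set, while you split it into the two inequalities. Your more explicit two-sided argument (and the remark that the minimum is attained, e.g.\ via the all-zero final word) is a slightly more careful rendering of the same idea, not a different proof.
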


\begin{proof}
Let $S'$ be a $2$-dominating set in $C_n\Box P_m$ such that $\gamma_2(C_n\Box P_m)=|S'|$. Then, the last word of $S'$, say $\boldsymbol{p'}$, is a final word and, by Theorem~\ref{th:vector},
\begin{eqnarray*}
  |S'|&=&\text{minimum cardinal of a quasi-2-dominating set of } C_n\Box P_{m}\\
      &=&\text{minimum cardinal of a quasi-2-dominating set of } C_n\Box P_{m} \text{ with word } \boldsymbol{p'} \text{ in the last column} \\
      &=& X^m(\boldsymbol{p'})
\end{eqnarray*}

However, by the selection of $S'$, it is clear that
$X^m(\boldsymbol{p'})=\min \{ X^m(\boldsymbol{p})\colon \boldsymbol{p} \text{ is a final word}\}$, so finally,
\begin{eqnarray*}
\gamma_2(C_n\Box P_m)&=&|S'|\\
                     &=&X^m(\boldsymbol{p'})\\
                     &=&\min \{ X^m(\boldsymbol{p})\colon \boldsymbol{p} \text{ is a final word}\}
\end{eqnarray*}

\vspace*{-7mm}
\end{proof}

The last theorem gives a procedure to compute $\gamma_2(C_n\Box P_m)$, for fixed $n$ and $m$.
We finish this section with a standard argument about the $(\min,+)$ matrix multiplication, that allows us to obtain $\gamma_2(C_n\Box P_m)$, for fixed $n$ and any $m\geq 2$. To this end, we need the following well known lemma whose proof we include here for the sake of completeness.

\begin{lemma}\label{lem:recurrence}
Let $A$ be a square matrix and let $X^1$ be a vector with the same size as $A$. Let $X^i$ be the vectors recursively obtained by $X^{i+1}=A\boxtimes X^i$ for $i\geq 2$ and suppose that there exist natural numbers $m_0,a,b$ such that $X^{m_0+a}=b\boxtimes X^{m_0}$. Then, $X^{m+a}=b\boxtimes X^{m}$, for every $m\geq m_0$.
\end{lemma}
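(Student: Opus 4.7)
The plan is to prove the statement by induction on $m \ge m_0$. The base case $m = m_0$ is exactly the hypothesis $X^{m_0+a} = b \boxtimes X^{m_0}$. For the inductive step, assuming $X^{m+a} = b \boxtimes X^m$ for some $m \ge m_0$, I would start from the recurrence
\begin{eqnarray*}
X^{m+1+a} = A \boxtimes X^{m+a},
\end{eqnarray*}
substitute the inductive hypothesis to obtain $A \boxtimes (b \boxtimes X^m)$, and then swap the order of the two operations to produce $b \boxtimes (A \boxtimes X^m) = b \boxtimes X^{m+1}$, which is the desired equality.

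The one nontrivial ingredient is the commutation property $A \boxtimes (b \boxtimes X) = b \boxtimes (A \boxtimes X)$ for any scalar $b \in \mathbb{R}\cup\{\infty\}$, vector $X$, and matrix $A$. I would verify it directly from the definition of $(\min,+)$ multiplication by checking a single entry: since $b$ is a fixed scalar, it can be pulled out of the minimum,
\begin{eqnarray*}
\bigl(A \boxtimes (b\boxtimes X)\bigr)_i \;=\; \min_k \bigl(A_{ik} + b + X_k\bigr) \;=\; b + \min_k\bigl(A_{ik}+X_k\bigr) \;=\; \bigl(b\boxtimes(A\boxtimes X)\bigr)_i.
\end{eqnarray*}
This is precisely the statement that adding the constant $b$ to every coordinate commutes with $(\min,+)$ matrix-vector multiplication, i.e. shift-invariance of the minimum.

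There is no real obstacle here beyond making the scalar-commutation step explicit and handling the $\infty$ case (which is immediate, since adding $\infty$ and taking minima behave as expected in the tropical semiring). The whole argument is a short two-line induction once that commutation identity is in place, which is why the lemma is called standard; the only reason to include a proof is self-containment, as the authors note.
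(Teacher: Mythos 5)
Your proof is correct and follows essentially the same route as the paper: induction on $m$ with the single identity $A\boxtimes(b\boxtimes X)=b\boxtimes(A\boxtimes X)$ doing all the work. You even make the scalar-commutation step more explicit than the paper does, which uses it without comment.
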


\begin{proof}
We proceed by induction. By hypothesis, $X^{m_0+a}=b\boxtimes X^{m_0}$. Let $m\geq m_0$ be such that $X^{m+a}=b\boxtimes X^{m}$ then, $X^{(m+1)+a}=A\boxtimes X^{m+a}=A\boxtimes(b\boxtimes X^{m})=b\boxtimes(A\boxtimes X^{m})=b\boxtimes X^{m+1}$, as desired.
\end{proof}

\begin{theorem}\label{th:one_parameter}
Let $n\geq 3$ be an integer. Let $X^1$ be the initial vector defined by Equation~\ref{equation:vector} and let $A$ be the transition matrix defined by Equation~\ref{equation:matrix}. Let $X^i$ be the vectors recursively obtained by $X^{i+1}=A\boxtimes X^i$ for $i\geq 1$ and assume that there exist integers $m_0,a,b$ such that $X^{m_0+a}=b\boxtimes X^{m_0}$. Then, $\gamma_2(C_n\Box P_{m+a})=b+\gamma_2(C_n\Box P_m)$, for every $m\geq m_0$.
\end{theorem}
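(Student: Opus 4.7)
The plan is to chain together the two results already proved: Lemma~\ref{lem:recurrence} (a purely algebraic statement about the recursion $X^{i+1}=A\boxtimes X^i$) and Theorem~\ref{th:two_parameters} (which identifies $\gamma_2(C_n\Box P_m)$ with the minimum of $X^m$ over final words). The whole argument is just a short calculation once the hypothesis is propagated forward in $m$; I do not expect any serious obstacle.

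First I would invoke Lemma~\ref{lem:recurrence} directly on the hypothesis $X^{m_0+a}=b\boxtimes X^{m_0}$, obtaining that $X^{m+a}=b\boxtimes X^{m}$ for every $m\geq m_0$. Entrywise, recalling the convention $(b\boxtimes X)(\boldsymbol{p})=b+X(\boldsymbol{p})$ introduced just after the definition of $\boxtimes$, this reads $X^{m+a}(\boldsymbol{p})=b+X^{m}(\boldsymbol{p})$ for every suitable word $\boldsymbol{p}$ and every $m\geq m_0$.

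Next, I would apply Theorem~\ref{th:two_parameters} twice. On one hand,
\begin{eqnarray*}
\gamma_2(C_n\Box P_{m+a})=\min\{X^{m+a}(\boldsymbol{p})\colon \boldsymbol{p}\text{ is a final word}\}.
\end{eqnarray*}
Substituting the entrywise identity above and pulling the additive constant $b$ out of the minimum (which is valid because the set of final words is independent of $m$, the minimum is over the same index set, and $b$ is a fixed real), this equals
\begin{eqnarray*}
\min\{b+X^{m}(\boldsymbol{p})\colon \boldsymbol{p}\text{ is a final word}\}=b+\min\{X^{m}(\boldsymbol{p})\colon \boldsymbol{p}\text{ is a final word}\}.
\end{eqnarray*}
Applying Theorem~\ref{th:two_parameters} a second time, the right-hand side is exactly $b+\gamma_2(C_n\Box P_m)$, which gives the claim for every $m\geq m_0$.

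The only subtle point worth flagging in the write-up is ensuring that the minimum on the right in Theorem~\ref{th:two_parameters} is never over an empty set or taken at $\infty$; this is guaranteed because $C_n\Box P_m$ admits $2$-dominating sets (e.g.\ the whole vertex set), so at least one final word $\boldsymbol{p}$ satisfies $X^m(\boldsymbol{p})<\infty$, and the finiteness is preserved by adding $b$. Hence the manipulations above are legitimate, and the theorem follows.
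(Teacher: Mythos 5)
Your proof is correct and follows essentially the same route as the paper: apply Lemma~\ref{lem:recurrence} to propagate the relation $X^{m+a}=b\boxtimes X^{m}$ to all $m\geq m_0$, then use Theorem~\ref{th:two_parameters} twice and pull the constant $b$ out of the minimum over final words. The extra remark about finiteness of the minimum is a harmless addition, not a deviation from the paper's argument.
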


\begin{proof}
By using Theorem~\ref{th:two_parameters} and Lemma~\ref{lem:recurrence}, we obtain that
\begin{eqnarray*}
\gamma_2(C_n\Box P_{m+a})&=&\min \{ X^{m+a}(\boldsymbol{p})\colon \boldsymbol{p} \text{ is a final word}\}\\
                         &=&\min \{ b\boxtimes X^{m}(\boldsymbol{p})\colon \boldsymbol{p} \text{ is a final word}\}\\
                         &=& \min \{ b+X^m(\boldsymbol{p})\colon \boldsymbol{p} \text{ is a final word}\}\\
                         &=& b+\min \{X^m(\boldsymbol{p})\colon \boldsymbol{p} \text{ is a final word}\}\\
                         &=& b+\gamma_2(C_n\Box P_{m})
\end{eqnarray*}

\vspace*{-7mm}
\end{proof}

Note that, given a fixed integer $n\geq 3$, the theorem above provides the following finite difference equation for the $2$-dominating number of the cylinder $C_n\Box P_m$, where $a$, $b$, and $m_0$ are as in Theorem~\ref{th:one_parameter}.
$$\gamma_2(C_n\Box P_{m+a})-\gamma_2(C_n\Box P_{m}))=b, \text{ for } m\geq m_0$$
whose boundary values are $\gamma_2(C_n\Box P_{m})$ for $m_0\leq m\leq m_0+a-1$. The unique solution of such finite difference equation provides $\gamma_2(C_n\Box P_m)$ for every $m\geq m_0$. In addition, the procedure described in Theorem~\ref{th:two_parameters} gives the remaining values, that is, the values of $\gamma_2(C_n\Box P_m)$ for $2\leq m\leq m_0-1$.

\section{Algorithmic results}\label{Sec:algorithms}

In this section we present the algorithms that we have used to compute $\gamma_2(C_n\Box P_m)$, in cases $3\leq n \leq 15$. We have run the algorithms in a CPU AMD EPYC 7642  and we also present the results that we have obtained. We have divided the computation into two steps. Firstly, Algorithm~\ref{alg:recurrence} provides the finite difference equation presented in Theorem~\ref{th:one_parameter}.
\begin{algorithm}[h]
\caption{Searching for the finite difference equation for $\gamma _2(C_n\Box P_{m})$, with $n$ fixed}
\label{alg:recurrence}
\LinesNumbered
\KwIn{$n\geq 3$, a natural number}
\KwOut{the finite difference equation $\gamma _2(C_n\Box P_{m+a})-\gamma_2(C_n\Box P_m)=b$, for $m\geq m_0$
or finite difference equation not found }
compute all suitable words of length $n$\;
compute the initial words and the initial vector $X^1$\;
compute matrix $A$\;
compute the vectors  $X^{i+1}=A\boxtimes X^i$, for $i\leq K$ big enough\;
 {\If {$X^{m_0+a}=b \boxtimes X^{m_0}$ \text{ for natural numbers} $m_0,a,b$}
  {\Return $m_0, a, b$}
  {\Else
  {\Return finite difference equation not found}
  }
  }
\end{algorithm}

There are some sufficient conditions to ensure that Step 5  of Algorithm~\ref{alg:recurrence} is true (see~\cite{Spalding1998}) but they provide a huge value for $m_0$, in the order of the square of the matrix size, and they are not practical. We have looked for the desired relationship just by checking the vectors computed in Step 4, with $K=20$.

In addition, from the computational point of view it is more efficient to compute vectors in Step 4 recursively as $X^{i+1}=A\boxtimes X^i$, for $i\leq 20$, instead of using the alternative formula $X^i=A^{i-1}\boxtimes X^1$. The reason is that the initial matrix $A$ is sparse but it becomes dense after a small number ($3$ or $4$) of $(\min,+)$ powers, so the resources needed to compute and store the matrix $A$ and $20$ vectors are smaller than the resources needed to compute and store $20$ successive powers of $A$. The $(\min,+)$ product of the matrix $A$ and the vectors $X^i$ has been done with a modification of the library CSPARSE~\cite{CSparse}, to adapt it to such product. The computation times of Steps 1 and 2 in Algorithm~\ref{alg:recurrence} are negligible compared to those of Steps 3 and 4, even in the largest case $n=15$, so we do not show them. We show the results obtained by Algorithm~\ref{alg:recurrence} in Table~\ref{table:values}.

\begin{table}[!h]
\centering
\caption{Results obtained by Algorithm~\ref{alg:recurrence}}%
\setlength{\tabcolsep}{3pt}
{\small
\begin{tabular}{>{\raggedleft\arraybackslash}p{10pt} >{\raggedleft\arraybackslash}p{60pt}> {\raggedleft\arraybackslash}p{55pt}>{\raggedleft\arraybackslash}p{70pt}>{\raggedleft\arraybackslash}p{80pt} >{\raggedleft\arraybackslash}p{25pt}
>{\raggedleft\arraybackslash}p{25pt} >{\raggedleft\arraybackslash}p{25pt}}
\hline
$n$ & Number of suitable words & Memory size of the matrix $A$ & Computation time of the matrix $A$ & Computation time of vectors $X^i, i\leq 20$ &$m_0$ & $a$ & $b$ \\
\hline
3& 17 & $0.32 KB$ & $< 1 s.$ & $< 1 s.$ & 5& 1 &1 \\
\hline
4& 40 & $1.18 KB$ & $< 1 s.$& $< 1 s.$ &6& 2 &3 \\
\hline
5& 92 & $4.64 KB$ &$< 1 s.$ &$< 1 s.$ & 8& 2& 4 \\
\hline
6& 235 & $19.82 KB$ &$< 1 s.$ &$< 1 s.$ & 7 & 1 & 2 \\
\hline
7& 590 &   $82.34 KB$   &$< 1 s.$ &$< 1 s.$ & 8 & 2 & 5\\
\hline
8& 1456 &   $339.27 KB$   &$< 1 s.$ &$< 1 s.$ &7 & 2 & 6\\
\hline
9& 3617 &  $1.37 MB$   &  $< 1 s.$& $3 s.$   &8 & 1& 3  \\
\hline
10& 9004 &   $5.70 MB$  & $4 s. $&   $10 s. $ &9 & 2 & 7\\
\hline
11& 22376 & $23.61 MB$    & $22 s. $&  $ 43 s.$  &10 & 2 & 8\\
\hline
12& 55603 &  $97.84 MB$   & $2 m. 14 s. $& $2 m. 53 s. $   &11 & 1 & 4 \\
\hline
13& 138218 &  $405.51 MB$  &$  13 m. 30 s.$ & $11m. 50 s. $  &10 & 2 & 9\\
\hline
14& 343564 &  $0.95GB$   & $ 82 m. 10 s. $ & $49m. 13s. $  &11 & 2 & 10\\
\hline
15& 853937 &  $6.8GB$   & $ 8h. 23 m. 34s. $ &  $3 h. 24 m. 51 s.$ & 11&1  &5 \\
\hline
\end{tabular}
}
\label{table:values}
\end{table}

\medskip	
The values in Table~\ref{table:values} provide the finite difference equation $f(m+a)-f(m)=b, m\geq m_0$, where $f(m)=\gamma_2(C_n\Box P_m)$, for fixed $n\in \{3, \dots ,15\}$. Note that a regular behavior can be found in the values of $a$ and $b$, for the cases $3\leq n\leq 15$.
\begin{itemize}
\item If $n\equiv 0\pmod 3$ then, $a=1$ and $\displaystyle b=\frac{n}{3}\cdot$
\item If $n\equiv 1\pmod 3$ then, $a=2$ and $\displaystyle b=\frac{2n+1}{3}\cdot$
\item If $n\equiv 2\pmod 3$ then, $a=2$ and  $\displaystyle b=\frac{2n+2}{3}\cdot$
\end{itemize}

The boundary values of the finite difference equations provided above are $f(m)=\gamma_2(C_n\Box P_{m})$ for $m_0\leq m\leq m_0+a-1$. We have computed them by using Algorithm~\ref{alg:small}, which follows from the result shown in Theorem~\ref{th:two_parameters}.

We have also computed with Algorithm~\ref{alg:small} the remaining values, that is, $\gamma_2(C_n\Box P_m)$ for every $2\leq m\leq m_0-1$. All the values are in Table~\ref{table:equation}.

\medskip
\begin{algorithm}[!h]
\LinesNumbered
\caption{Computation of $\gamma_2(C_n\Box P_m)$, for $n,m$ fixed}
\label{alg:small}
\KwIn{$n\geq 3, m\geq 2$, natural numbers}
\KwOut{$\gamma_2(C_n\Box P_m)$}
compute all suitable words of length $n$\;
compute the initial words and the initial vector $X^1$\;
compute matrix $A$\;
compute the vectors  $X^{i+1}=A\boxtimes X^i$, for $1\leq i\leq m-1$\;
compute the final words\;
\Return $\min \{ X^m(\boldsymbol{p})\colon \boldsymbol{p} \text{ is a final word}\}$
\end{algorithm}\medskip

\begin{table}[!h]
 {\scriptsize
\centering
\caption{Boundary values and remaining values obtained with Algorithm~\ref{alg:small}}%
\setlength{\extrarowheight}{0.1cm}
\setlength{\tabcolsep}{3pt}
\scalebox{0.98}{\hspace{2mm}\begin{tabular}{>{\raggedleft\arraybackslash}p{10pt} >{\raggedright\arraybackslash}p{120pt} >{\raggedright\arraybackslash}p{90pt}
>{\raggedright\arraybackslash}p{200pt}}
\hline
{\footnotesize $n$} & {\footnotesize finite difference equation} & {\footnotesize boundary values} & {\footnotesize remaining values}\\
\hline
3& $f(m+1)-f(m)=1, m \geq 5$ & $f(5)=7$& $f(2)=3, f(3)=4, f(4)=6$  \\
\hline
4&  $f(m+2)-f(m)=3, m \geq 6$ & $f(6)=11, f(7)=12$ & $f(2)=4, f(3)=6, f(4)=8, f(5)=9$  \\
\hline
5& $f(m+2)-f(m)=4, m\geq 8$ & $f(8)=18, f(9)=19$ & $f(2)=5$, $f(3)=7$, $f(4)=10$, $f(5)=11$, $f(6)=14$, $f(7)=15 $  \\
\hline
6& $f(m+1)-f(m)=2, m\geq 7$ & $f(7)=18$ & $f(2)=6$, $f(3)=8$, $f(4)=11$, $f(5)=13$, $f(6)=16$\\
\hline
7& $f(m+2)-f(m)=5, m\geq 8$ &  $f(8)=24, f(9)=26$ & $f(2)=7$, $f(3)=10$, $f(4)=13$, $f(5)=15$, $f(6)=18$, $f(7)=21$  \\
\hline
8& $f(m+2)-f(m)=6, m\geq 7$ &  $f(7)=24, f(8)=27$  &$f(2)=8$, $f(3)=11$, $f(4)=14$, $f(5)=18$, $f(6)=21$\\
\hline
9& $f(m+1)-f(m)=3, m\geq 8$ &  $f(8)=30$ & $f(2)=9$, $f(3)=12$, $f(4)=16$, $f(5)=20$, $f(6)=24$, $f(7)=27$  \\
\hline
10& $f(m+2)-f(m)=7, m\geq 9$ & $f(9)=37, f(10)=41$ & $f(2)=10$, $f(3)=14$, $f(4)=18$, $f(5)=22$, $f(6)=26$, $f(7)=30$, $f(8)=34$ \\
\hline
11& $f(m+2)-f(m)=8, m\geq 10$ & $f(10)=45, f(11)=49$  & $f(2)=11$, $f(3)=15$, $f(4)=20$, $f(5)=24$, $f(6)=28$, $f(7)=33$, $f(8)=37$, $f(9)=41$ \\
\hline
12& $f(m+1)-f(m)=4, m\geq 11$  & $f(11)=52$ & $f(2)=12$, $f(3)=16$, $f(4)=22$, $f(5)=26$, $f(6)=31$, $f(7)=36$, $f(8)=40$, $f(9)=44$, $f(10)=48$ \\
\hline
13& $f(m+2)-f(m)=9, m\geq 10$ & $f(10)=53, f(11)=57$ & $f(2)=13$, $f(3)=18$, $f(4)=24$, $f(5)=28$, $f(6)=34$, $f(7)=39$, $f(8)=44$, $f(9)=48$ \\
\hline
14& $f(m+2)-f(m)=10, m\geq 11$ & $f(11)=62, f(12)=67$ & $f(2)=14$, $f(3)=19$, $f(4)=25$, $f(5)=30$, $f(6)=36$, $f(7)=42$, $f(8)=47$, $f(9)=52$, $f(10)=57$ \\
\hline
15& $f(m+1)-f(m)=5, m\geq 11$  & $f(11)=65$ &  $f(2)=15$, $f(3)=20$, $f(4)=27$, $f(5)=33$, $f(6)=39$, $f(7)=45$, $f(8)=50$, $f(9)=55$, $f(10)=60$\\
\hline
\end{tabular} }
\label{table:equation}
}
\end{table}

Again, the running times of Steps 5 and 6 in Algorithm~\ref{alg:small} are negligible compared to those of Steps 3 and 4, even in the largest case $n=15$.
With the data in Tables~\ref{table:values} and~\ref{table:equation}, we have solved the finite difference equation $\gamma_2(C_n\Box P_{m+a})-\gamma_2(C_n\Box P_{m}))=b, \text{ for } m\geq m_0$, for each value of n, and therefore, we have computed the general formula of $\gamma_2(C_n\Box P_{m})$, with $3\leq n\leq 15$ and $m\geq m_0$. In some cases, some remaining values ($m<m_0$) also agree with the solution of the equation and we have included them. The values of $\gamma_2(C_n\Box P_{m})$ with $3\leq n\leq 15$, $m<m_0$ and such that they do not follow the general formula, can be found in Table~\ref{table:equation}.
\par\bigskip

$\gamma_2(C_3\Box P_{m})=m+2$, for $m\geq 4$
\par\bigskip

$\gamma_2(C_4\Box P_{m})=\displaystyle \left\lceil \frac{3m+3}{2}\right\rceil$, for $m\geq 3$
\par\bigskip

$\gamma_2(C_5\Box P_{m})=
\left\{
\renewcommand{\arraystretch}{1.3}
\begin{array}{ll}
2m+2& \text{if } m\neq 2 \text{ and } m\equiv 0\pmod 2 \\
2m+1& \text{if } m=2 \text{ or } m\equiv 1\pmod 2
\end{array}
\right.
$
\par\bigskip

$\gamma_2(C_6\Box P_{m})=2m+4$, for $m\geq 6$
\par\bigskip

$\gamma_2(C_7\Box P_{m})=\displaystyle \left\lceil \frac{5m+7}{2}\right\rceil$, for $m\geq 7$
\par\bigskip

$\gamma_2(C_8\Box P_{m})=3m+3$, for $m\geq 5$

\par\bigskip

$\gamma_2(C_9\Box P_{m})=3m+6$, for $m\geq 6$
\par\bigskip

$\gamma_2(C_{10}\Box P_{m})=\displaystyle \left\lceil \frac{7m+11}{2}\right\rceil$, for $m\geq 7$

\par\bigskip

$\gamma_2(C_{11}\Box P_{m})=4m+5$, for $m\geq 7$

\par\bigskip

$\gamma_2(C_{12}\Box P_{m})=4m+8$, for $m\geq 7$

\par\bigskip

$\gamma_2(C_{13}\Box P_{m})=\displaystyle \left\lceil \frac{9m+15}{2}\right\rceil$, for $m\geq 7$

\par\bigskip

$\gamma_2(C_{14}\Box P_{m})=5m+7$, for $m\geq 7$

\par\bigskip

$\gamma_2(C_{15}\Box P_{m})=5m+10$, for $m\geq 7$

\par\bigskip

\section{Conclusions}\label{sec:conclusions}
In this paper we present two algorithms to compute the $2$-domination number of cylinders $C_n\Box P_m$ with $3\leq n\leq 15$ and $m\geq 2$. Both algorithms are based on the $(\min,+)$ matrix multiplication and they adapt to this graph family and this parameter a technique that has been used to compute other domination parameters in the Cartesian product of two paths and a cycle and a path. The first algorithm provides a finite difference equation for the $2$-domination number of $C_n\Box P_m$, where $n$ is fixed and small enough and the second one computes the boundary values of such equation and the remaining values not included in it.

The size of the matrices used by both algorithms grows exponentially with $n$ and it is the key to decide if a value of $n$ is suitable or not. In our case, we have run the algorithms for $n\leq 15$ and in the largest case the matrix has a size of 853937 rows and columns. From the computational point of view, we have taken advantage of the fact that such matrices are sparse and we have used a modification of the library CSPARSE, to compute the $(\min,+)$ product of the matrices and appropriate vectors.

\medskip
The algorithms have provided formul\ae\ for $\gamma_2(C_{n}\Box P_{m}), 3\leq n\leq 15, m\geq 2$ and they follow regular patterns, for big enough values of $m$, except for the case $n=5$:
\begin{itemize}
\item If $n=3,6,9,12,15$ then, $\displaystyle\gamma_2(C_{n}\Box P_{m})=\frac{n(m+2)}{3}\cdot$
\item If $n=4,7,10,13$ then, $\displaystyle\gamma_2(C_{n}\Box P_{m})=\left\lceil\frac{\frac{2n+1}{3}(m+1)}{2}+\frac{n-4}{3}\right\rceil\cdot$
\item If $n=8,11,14$ then, $\displaystyle\gamma_2(C_{n}\Box P_{m})=\frac{(n+1)(m+1)}{3}+\frac{n-8}{3}\cdot$
\end{itemize}

We think that these formul\ae\  could be also valid for larger values of $n$.

\subsection*{Acknowledgements}

This work was partially supported by the following grants of the Spanish Ministry of Science and Innovation:
RTI2018-095993-B-I00 and PID2019-104129GB-I00/AEI/10.13039/501100011033, both funded by
MCIN/AEI/10.13039/ 501100011033/FEDER ``A way to make Europe.''

\bibliographystyle{fundam}
\bibliography{bibfile}

\end{document}